\newtheorem{prop}{Proposition}
\begin{document}

\title[The free-choice paradigm]{On the mathematics of the free-choice paradigm}
\author{Peter Selinger}
\author{Kristopher Tapp}

\address{Department of Mathematics and Statistics\\Dalhousie University\\Halifax, Nova Scotia B3H 4R2, Canada}
\email{selinger@mathstat.dal.ca}
\address{Department of Mathematics\\ Saint Joseph's University\\
5600 City Ave.\\
Philadelphia, PA 19131}
\email{ktapp@sju.edu}

\maketitle
\date{\today}

\section{Introduction}

Experimental design can be tricky to get right.  A very potent illustration of this point is found in Chen and Risen's recent identification of a logical flaw in a number of past \emph{free choice} experiments studying the psychological concept of \emph{cognitive dissonance}\cite{CR1}.  This illustration is unique in many ways.  First, the mistake is subtle yet elementary; in the simplest of the affected experiments, the mistake is equivalent to the error of misunderstanding the Monty Hall problem.  Second, the mistake affected a fairly large number of research experiments performed over a span of five decades.  Third, it challenged some of the experimental evidence for dissonance theory, which is a well known and celebrated area of social psychology.  In fact, the term ``cognitive dissonance'' has migrated from the scientific realm into popular culture, where it occurs frequently in New York Times articles, is the title of a Pod Cast, is featured in several Dilbert cartoons, and has been used to explain everything from why President Clinton was not impeached to why some weight loss programs work better than others.

In this paper, our first goal is expository -- we wish to tell this interesting story, which seems to have largely escaped the notice of the mathematics community.  We will describe the theory of cognitive dissonance, the experimental evidence on which it rests, and the flaw observed by Chen and Risen.  Our second goal is to propose methods by which this mistake can be fixed.  Chen and Risen already implemented a modified experiment that used a novel type of control group to avoid the logical pitfall.  We will describe experimental designs by which the flaw can be avoided even without a control group.  Our methods are based on the inherent symmetry of a free-choice experiment.  Our third goal is to describe additional problems with all of these free choice experiments (including our own methods and the Chen-Risen method), calling back into question whether any free choice experiment can correctly measure the effects of cognitive dissonance.
\section*{Acknowledgements} The authors would like to thank Keith Chen and Jane Risen for their insightful suggestions and comments on this work.

\section{Cognitive Dissonance}
Festinger coined the term \emph{cognitive dissonance} to describe the uncomfortable cognitive state that arises when one's actions are inconsistent with one's underlying attitudes/beliefs~\cite{F}.  Dissonance Theory is largely about our tendency to reduce dissonance by shifting our underlying attitudes.  For example, after you break up with a romantic partner, you might tell yourself that you never really liked him/her in the first place.  That's dissonance -- you are shifting your attitude (how you really feel about your ex-partner) to make it more compatible with your action (breaking up with him/her).  It helps you to avoid regret and move on.

If the end of a relationship seems too mundane, dissonance theory began with the end of the world.  In 1955, Festinger learned of a cult, which he named \emph{The Seekers}, who believed that God would destroy the earth on December 21 of that year, but aliens from the planet Clarion would arrive in a spaceship before the destruction to save the true believers.  They really believed it.  The gave away their belongings, divorced their non-believing spouses, and sacrificed all that they had in order to follow the specific instructions they had received from Clarion.  One of Festinger's students infiltrated the group and was able to observe their preparations for the earth's annihilation~\cite{Clarion}.  They gathered to wait with excited anticipation for the spaceship that would save them at midnight.  And they waited.  A few minutes after midnight, they all reset their watches to agree with the one member whose watch still read before midnight.  Several minutes later, when even the modified watches read past midnight, one member realized that he had not followed Clarion's instruction to remove all metal objects in preparation for space flight.  He still had a metal tooth filling.  He removed it.  After several more hours of nervous waiting, there was still no spaceship and no annihilation.  How did they handle the gaping inconsistency they now felt between their actions and their new resignation that the earth would continue to spin?  Cognitive dissonance doesn't get much bigger than that.  Festinger and his students had predicted that they would employ heroic measures to rationalize away the apparent inconsistency.  And so they did.  At 4:00am, they received a final message from Clarion: ``This little group, sitting all night long, has spread so much goodness and light that the God of the Universe has spared the Earth from its destruction.''  It was their group that had saved the earth!  As Festinger had predicted, The Seekers not only found a new consistency-restoring belief to cling to, but they used every possible media outlet to share their new belief with the world that they had saved.

Thus began a psychology theory which ``has had an amazing fifty-year run.''~\cite{Cooper}  Dissonance theory rests on the results of hundreds of laboratory experiments, which fall into three experimental paradigms: \emph{free-choice}, \emph{induced compliance}, and \emph{effort justification}.  The error observed by Chen and Risen affects only the free-choice paradigm, which we will discuss in the next section.  In the remainder of this section, we will briefly discuss the two unaffected paradigms.

The first \emph{induced compliance} experiment was performed by Festinger and Carlsmith~\cite{FC}, using students as subjects.  In their experiment, each subject was asked to perform a boring task involving turning pegs on a peg board, and then asked to convince another student that the task had been interesting (in an elaborate ruse to convince the subject to tell this lie, s/he was told something like this: ``The other student is waiting to be the next subject in this experiment.  Unlike you, he is in the experimental group, which means that before performing the peg board task, he will be prepped by a confederate to believe that the task will be interesting.  But the confederate did not show up, so can I hire you to play the role of the confederate?'')  Finally, the subject answered debriefing questions, including questions about how interesting the task was.  The result was that subjects paid $\$1$ to lie generally came to actually believe themselves that the task had been interesting, while subjects paid $\$20$ to lie did not.  Presumably, cognitive dissonance was aroused by the inconsistency between finding the task boring but saying that it was interesting.  A student paid $\$20$ had a rational explanation for this inconsistency (it's worth lying for that much money) while a student paid $\$1$ may have unconsciously reasoned something like this: ``I'm not dumb enough to lie for one dollar, so I must have been telling the truth.''  Dozens of other induced compliance experiments have been performed.  The majority of them used a similar design in which students were convinced (by a similarly elaborate ruse) to write counter-attitudinal essays.  It was generally found that students paid only a small amount to write such an essay tended to shift their beliefs in the direction of what they had written, while students paid a large amount did not.

The first \emph{effort justification} experiment was performed by Aronson and Mills~\cite{AM}.  They found that students who endured a severe initiation to join what turned out to be a dull club ended up liking the club more than students who endured only a mild initiation.  Presumably, these students had a larger degree of dissonance aroused by the inconsistency between their efforts and their group experience, and thus had stronger motivation to reduce dissonance by shifting their attitudes about the group.

The induced compliance and effort justification experiments, and the dissonance theory interpretation of their results, have been been attacked on many fronts. Cooper's book~\cite{Cooper} provides a wonderful overview of five decades of probing and re-working of these experiments and their interpretations.  Versions of the experiments have been performed on subjects wired to polygraphs, subjects who had unknowingly taken stimulants, subjects who had taken placebos but believed they had taken stimulants, subjects who underwent electric shock therapy, subjects with amnesia, children and draft-dodgers, and the outcomes have been correlated with the subjects' degree of self-esteem, level of introversion/extroversion, and ethnicity, to name a few.  Cooper argues that dissonance theory largely survived the attacks in a modified form.

None of the attacks were mathematical in nature until the free-choice paradigm was challenged on probabilistic grounds in 2010.

\section{The Free-Choice Paradigm}

The first free-choice experiment was performed by Brehm~\cite{B}, a student of Festinger.  Dozens more free-choice experiments followed.  In this section, we review the structure of a typical free-choice experiment.

A collection of objects is used, say 15 household objects.  In the first stage, a subject is asked to rank these objects from $1$=most desirable to $15$=least desirable.  In the second stage, the subject is asked to choose between two of these objects, and is usually told that s/he will be allowed to take the selected one home.  For example, the subject might be asked to choose between the objects that s/he just ranked $7^\text{th}$ best and $9^\text{th}$ best (where the numbers $7$ and $9$ are pre-selected and constant over all subjects).  Finally in the third stage, the subject is again asked to rank all 15 objects.  This process is repeated with many subjects.  Each subject's \emph{spread} is calculated as the amount that the ranking of the chosen object improves (i.e., decreases) plus the amount that the ranking of the rejected object worsens (i.e., increases) between the first and third stage of the experiment.

Positive spread is interpreted as an indication that the subject shifted his/her ranking to make it more consistent with his/her object choice, presumably to reduce dissonance. For example, if I chose to take home the hairdryer instead of the toaster, then dissonance is aroused by the inconsistency between my choice of the hairdryer and my lingering memories of the reasons that I almost chose the toaster (I already own a hairdryer and the toaster sure looks shiny).  I can reduce dissonance by downplaying these memories (the new hairdryer is better than the one that I own, and shiny things hurt my eyes).  This is the kind of opinion-shift that causes positive spread and was interpreted as evidence of \emph{choice-induced attitude change} resulting from dissonance.

In some experiments, positive average spread was taken as evidence for dissonance theory.  In other experiments, the spread was required to be more positive for an experimental group than for a control group.  Members of a control group might skip stage two, or else in stage two might be asked to make a choice between a pair of irrelevant objects.  Since they do not make the crucial choice in stage two, how is their spread defined?  Notice that in stage two, the majority of subjects in the experimental group might be expected to make the \emph{consistent} choice of the $7^\text{th}$ ranked object over the $9^\text{th}$; however, since $7$ and $9$ are fairly close together, a reasonable minority might make the \emph{reversal} choice of the $9^\text{th}$ over the $7^\text{th}$.  For a member of the control group, spread can be defined as if this member had made the consistent choice; that is, spread equals the amount that the ranking of the originally-$7^\text{th}$-ranked object decreases below 7 plus the amount that the ranking of the originally-$9^\text{th}$-ranked object increases above 9.  Brehm took this approach, and he compared the average spread of all of the control group to the average spread of only those members of the experimental group who had made a consistent choice (thereby assuring that spread was defined by the same formula for everyone).  Other experiments have used a different approach in which members of the control group in stage two are randomly given either their $7^\text{th}$ or $9^\text{th}$ ranked object, and their spread is then defined as if they had chosen the object they were given.

Another common variation is to ask the subject in stages one and three to provide \emph{ratings} of the objects rather than a \emph{ranking} of the objects.  For example, each of the 15 objects might be separately rated on a scale of 1 to 10.  When ratings are used, the choice in stage two is usually made between two objects that in stage one were rated similarly and highly (because difficult decisions are hypothesized to arouse more dissonance).

More recently, the free-choice setup was simplified to detect dissonance in monkeys and children, who are not capable of communicating a complete ranking of a collection of objects~\cite{ESB}.  In fact, simplifying the setup made its hidden mathematical error become more noticeable.  In the monkey version, the error became essentially equivalent to the Monty Hall problem, which finally led to its discovery.  We will restrict our attention to the experiments on adults, and we refer to~\cite{T} for the story of monkeys and Monty Hall.

\section{The Chen-Risen critique of free-choice experiments}
The outcomes of these free-choice experiments on adults -- positive average spread, or higher average spread for an experimental group than a control group -- were taken as evidence that subjects shifted their rankings to reduce dissonance.  Did you catch the mistake?  Chen and Risen pointed out that these outcomes are predicted by a null hypothesis model in which subjects never change their minds.  In their model, each subject has a never-changing \emph{true ranking} of the objects.  But random noise can cause the rankings that a subject provides in stages one and three to differ from this true ranking, and can cause the stage two choice to be inconsistent with this true ranking.  Under natural hypotheses on the distributions by which this random noise is modeled, Chen and Risen showed that many of the outcomes of free-choice experiments which have previously been taken as evidence for dissonance theory are actually predicted by their model.  Thus, the past experimental results do not provide evidence against the ``nobody changes their minds'' null hypothesis.

The key intuition behind Chen and Risen's observation is this: a subject's choice in stage two provides added probabilistic information about his/her true ranking.  For example, suppose that in stage one, I rank the toaster $7^{\text{th}}$ best and the hairdryer $9^{\text{th}}$ best.  At this point, your best guess is that I truly like the toaster $7^{\text{th}}$ best and the hairdryer $9^{\text{th}}$ best, although the truth could be otherwise due to random noise.  Now suppose that in stage two, I select the hairdryer.  At this point, how do you expect that I truly feel about toasters and hairdryers?  Your best guess is that my true feelings are a sort of average of the feelings I indicated in stages one and two.  After factoring in the stage two probabilistic information, you would guess that I truly like hairdryers more (and toasters less) than I indicated in stage one.  Thus, you should predict that my stage three ranking will shift in the direction of my stage two choice -- not because I will change my mind, but rather because my stage two choice indicated that my true feelings were always in this direction.

Chen and Risen went on to design and implement a free-choice experiment which avoided the error of its predecessors by using a novel type of control group; specifically, one whose members went through the same three stages, but with the order of the second and third stages reversed.  Thus, a member of the control group ranked the objects, then ranked the objects again, then chose between the ones which were ranked  $7^{\text{th}}$ and $9^{\text{th}}$ best in the first ranking.  Just as before, the subject's spread was defined as the amount that the chosen object improved plus the amount that the rejected object worsened between the two rankings.  Their null hypothesis (or at least a more precise formulation of it) predicts that the control group and experimental group should have the same average spread.  After all, if subjects are acting like computer programs spitting out random perturbations of their true rankings, then there is no causality between the stages, so the order of the stages is irrelevant.  But if cognitive dissonance really causes choice-induced attitude change, then the experimental group should have higher average spread than the control group.  When the numbers were crunched, their experimental group had only nominally higher average spread, which provided only weak support for dissonance theory.

The free-choice paradigm is one of the three legs of the tripod on which dissonance theory rests.  Chen and Risen's novel control group represents one possible way to repair this leg.  Their experiment could be re-implemented with more subjects in the hope of obtaining statistically significant results.  But experiments are expensive.  The purpose of our paper is to contribute to the conversation about how future free-choice experiments would best be conducted.  In the next section, we present alternatives to Chen and Risen's control group method.
\section{A free-choice experiment with no control group}\label{sec-2}
In this section, we demonstrate that it is possible to conduct a free-choice experiment even without a control group, in a manner that avoids the probabilistic error that Chen and Risen identified in past experiments.

We begin by more precisely describing a very general ``nobody changes their minds'' null hypothesis model.  Let $n$ denote the number of objects (the previous discussion used $n=15$).  Let $S(n)$ denote the set of all $n!$ possible orderings of the objects.  The rankings that a subject provides in stages one and three are samples from his/her \emph{ranking distribution}, $r:S(n)\rightarrow[0,1]$, defined so that for each $\sigma\in S(n)$, the value $r(\sigma)$ represents the probability that the subject will provide that ranking.  Notice that the same distribution is used for stages one and three, which captures the key idea that the subjects never change their minds.  Also notice that our model is more general than the Chen-Risen model, because we do not necessarily assume that the subject has a well defined ``true ranking'' from which the ranking distribution is obtained by adding random noise.  It is an arbitrary distribution, so the only requirement is that $\sum_{\sigma\in S(n)} r(\sigma)=1$.

In past free-choice experiments, all subjects have made their stage-two choices between the objects they just ranked in a pair of comparison positions (like 7 and 9), which was pre-selected and constant over all subjects.  We now suggest three modifications of this design that make the experiment immune to the critique of Chen and Risen.

\begin{prop}\label{P}Under the null hypothesis model, the expected average spread equals zero if the free-choice experiment is modified in any of these ways:
\begin{enumerate}
\item All subjects make their stage-two choices between the same pre-selected pair of comparison objects (like hairdryer and toaster).
\item Each subject makes his/her stage-two choice between the objects he/she just ranked in a pair of comparison positions (like 7 and 9) that is uniformly randomly chosen separately for each subject.
\item Each possible pair of comparison positions is used, one for each subject.  For example, with $n=15$ objects, there are 105 pairs of distinct numbers between 1 and 15, so this experiment requires exactly 105 subjects.
\end{enumerate}
\end{prop}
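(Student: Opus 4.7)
The plan is to formalize the null hypothesis as a clean i.i.d.\ model and then check that each of the three modifications removes the coupling between the stage-two choice and the two rankings. Specifically, model a subject by a ranking distribution $r$ and draw three independent samples $\sigma_1,\sigma_2,\sigma_3\sim r$: the stage-one ranking is $\sigma_1$, the stage-three ranking is $\sigma_3$, and the stage-two choice between two presented objects $A,B$ picks whichever has the smaller $\sigma_2$-rank. The spread then takes the explicit form
\[
\mathrm{spread}=\bigl(\sigma_1(\mathrm{chosen})-\sigma_3(\mathrm{chosen})\bigr)+\bigl(\sigma_3(\mathrm{rejected})-\sigma_1(\mathrm{rejected})\bigr).
\]

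For (1), since $\{A,B\}$ is preselected and does not depend on $\sigma_1$, the choice is a function of $\sigma_2$ alone and is therefore independent of both $\sigma_1$ and $\sigma_3$; splitting on which of $A,B$ is chosen and using that $\sigma_1,\sigma_3$ are identically distributed (so $E[\sigma_1(X)]=E[\sigma_3(X)]$ for each fixed object $X$) forces each summand to have expectation zero. For (2), a short direct calculation shows that when $(i,j)$ is uniform on pairs of distinct positions (independent of everything else), the induced comparison pair $(A,B)=(\sigma_1^{-1}(i),\sigma_1^{-1}(j))$ is uniform on pairs of distinct objects and, crucially, is independent of $\sigma_1$ (this uses only that $\sigma_1$ is a bijection); conditional on $(A,B)$ the setup reduces verbatim to case~(1), so the conditional expected spread is zero, and averaging over $(A,B)$ preserves this. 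For (3), the average of the per-pair expected spreads over all $\binom{n}{2}$ position pairs coincides with the quantity computed in case~(2) (the average of a function over a finite set equals its expectation under the uniform distribution on that set), and hence is zero, provided subject assignment to pairs is independent of the subjects' distributions (e.g., under an exchangeable subject-population assumption).

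The main obstacle is conceptual rather than computational: one must commit to a null-hypothesis model for the stage-two choice, since the excerpt explicitly specifies only the distribution of the stages-one and stages-three rankings. The ``three i.i.d.\ samples'' formulation above is one natural choice that is consistent with Chen and Risen's interpretation, and the argument would go through under any model in which, once the two presented objects are fixed, the choice is conditionally independent of $\sigma_1$ and $\sigma_3$. Once this independence is in place, the proof is essentially the observation that two identically distributed rankings assign the same expected rank to each object.
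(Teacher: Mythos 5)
Your proposal is correct and follows essentially the same route as the paper: reduce (2) to the case of a uniformly random pair of objects independent of $\sigma_1$, reduce that to (1), and obtain (3) by noting that each subject's expected spread averages to zero over all position pairs, with the same caveat that the assignment of subjects to pairs must be random/exchangeable. The only cosmetic difference is in (1), where you compute the expectation directly from the independence of the choice from $\sigma_1,\sigma_3$ and the equality $E[\sigma_1(X)]=E[\sigma_3(X)]$, while the paper phrases the same symmetry as ``swapping the two rankings preserves the distribution but negates the spread.''
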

\begin{proof}
To prove the first claim, notice that the subject's two rankings (in stages one and three) are independent samples from the same distribution.  Thus, exchanging the order of these two rankings does not affect the expected spread.  But on the other hand, exchanging the order of the two rankings has the effect of multiplying the expected spread by $-1$; it follows that the expected spread for a single subject (under the null hypothesis model) can only be zero.

It follows that the expected spread for a single subject equals zero if the stage-two choice is made between a pair of comparison objects that is uniformly randomly chosen separately for each subject.  This is because it equals zero for every possible pair of objects that could be chosen.

To prove the second claim, notice that the pair of comparison objects will depend here on the following two independent random processes:
\begin{itemize}
\item The pair of ranking positions is chosen uniformly at random.
\item The subject's stage-one ranking is sampled from his/her ranking distribution.
\end{itemize}
Since these two processes are independent, their order is irrelevant. If we imagine that the subject's ranking is provided first, then it is easy to see that, for any fixed ranking s/he provides, uniformly randomly choosing a pair of positions from this fixed ranking is equivalent to uniformly randomly choosing a pair of objects.

Notice that the above arguments establish that the expected spread for a \emph{single} subject equals zero in experiments (1) and (2).  If all subjects are identical, then claim (3) follows immediately, since in this case the expected spread for a single subject in (2) equals the average of all subjects' expected spreads in (3).  With non-identical subjects, we justify claim (3) as follows, provided that the assignment of subjects to $(i,j)$-pairs is random.  Assume for simplicity that $n=15$, and consider the $105\times 105$ matrix describing the expected spread for every subject using every possible-$(i,j)$ pair.  This matrix has a row for each subject and a column for each $(i,j)$-pair.  Since the average of the entries of every row of this matrix equals zero, the expected average of $105$ entries chosen one from each row (according to a random permutation of the numbers $1\ldots105$ that assigns subjects to pairs) must also equal zero.
\end{proof}

Each of the three modifications above represents a method for conducting a free choice experiment without a control group.  In all three cases, the null hypothesis model predicts an expected average spread of zero, so a measured average spread significantly above zero would be evidence against the null hypothesis, and might reasonably be attributed to choice-induced attitude change.

\section{Balanced positive and negative spread}

Dissonance theory has suffered from confusion as to what the expected spread would be if subjects never changed their minds.  Let $i$ and $j$ denote the ranking positions used in stage two, with $1\leq i<j\leq n$, and define $\Delta=j-i$.  Many past experimenters have assumed that the expected spread should equal zero.  It was first noted in~\cite{B} that \emph{regression to the mean} can cause negative spread when $\Delta$ is large.  For an extreme example, suppose that $n=15$, $i=1$ and $j=15$.  In this case, it is almost certain that the subject will make a consistent choice, in which case it is impossible for the spread to be positive, and regression to the mean will likely cause it to be negative.

Assuming that each subject's ranking distribution is obtained from a true ranking by adding random noise in a manner that satisfies some natural hypotheses, Chen and Risen stated as a theorem that the expected spread is positive for every choice of $i,j$, but they mentioned that regression to the mean can invalidate their proof when $\Delta$ is large.

In a typical situation, one might be led to expect positive spread when $\Delta$ is small due to the probabilistic arguments of Chen and Risen, and negative spread when $\Delta$ is large due to regression to the mean, although the exact meaning of ``large'' and ``small'' might depend on the specific ranking distributions of the subjects.

Proposition~\ref{P}(3) implies that choices of $i,j$ for which the expected spread is positive must be balanced by choices for which it is negative.  In the remainder of this section, we illustrate this balance with a simulated example that attempts to model the type of random noise that affects human rankings in free-choice experiments.

Our simulation uses $n=12$ objects.  When asked to give a ranking of these 12 objects, our simulated subject begins with its true ranking and modifies it by repeating the following procedure: it flips a weighted coin, and if the coin lands heads, then it swaps the objects in a randomly chosen pair of adjacent positions.  It stops making changes when the coin first lands tails.  Thus, the volume of random noise is controlled by the coin's probability of landing heads, which we denote $p$.  Furthermore, the simulated subject's choosing algorithm is the one induced by its ranking algorithm.  In other words, when asked to choose between a pair of objects, it uses the above algorithm to rank all $12$ objects, and then selects the better-ranked one of the pair.

In Table~\ref{table2}, we have calculated the expected spreads (rounded to three decimals) for all $\{i,j\}$ possibilities  with $p=0.8$.  Prior to rounding, the 66 values in this table must sum to exactly zero by Proposition~\ref{P}. Coincidentally, the rounded values sum to exactly zero as well.

\begin{table}[h!]
\caption{Expected spreads for the 12-object experiment with $p=0.8$.}
\tiny{
$$\begin{array}{r|rrrrrrrrrrr}
 & i=1 & 2 & 3 & 4 & 5 & 6 & 7 & 8 & 9 & 10 & 11 \\\hline
j=1 & - &  &  &  &  & &  &  & &  &  \\
2 & .319 & - & & &  & &  &  & &  &  \\
3 & -.010 & .557 & - &  &  &  &  &  & &  &  \\
4 & -.251 & .247 & .661 & - &  &  & &  &  &  & \\
5 & -.389 & .051 & .346 & .694 & - &  & &  &  & & \\
6 & -.458 & -.057 & .154 & .376 & .702 & - &  &  & &  & \\
7 & -.492 & -.111 & .050 & .184 & .384 & .704 & - & & &  & \\
8 & -.508 & -.138 & -.004 & .079 & .190 & .384 & .702 & - & &  &\\
9 & -.523 & -.157 & -.036 & .019 & .079 & .184 & .376 & .694 & - &  & \\
10 & -.557 & -.193 & -.078 & -.036 & -.004 & .050 & .154 & .346 & .661 & - &\\
11 & -.669 & -.306 & -.193 & -.157 & -.138 & -.111 & -.057
   & .051 & .247 & .557 & -\\
12 & -1.031 & -.669 & -.557 & -.523 & -.508 & -.492 & -.458
   & -.389 & -.251 & -.010 & .319 \\
\end{array}$$
}
\label{table2}\end{table}

In summary, this example is intended to illustrate Proposition~\ref{P} in order to aid our understanding of a question about which there has been some confusion in the literature, namely, what spread one should expect under the assumption that subjects never change their minds.

\section{Comparison of free-choice experiments}
In this section, we compare several versions of the free-choice experiment.  Some of our claims are justified by computer simulations of these experiments, in which subjects' random noise is modeled by the coin flipping process described in the previous section.  The three control-group-free experiments enumerated in Proposition~\ref{P} will be denoted as E1, E2 and E3 respectively.  The control-group-experiment of Chen and Risen will be denoted as E0.

As we have already pointed out, all four experiments are expected to detect no dissonance under the null hypothesis. This follows from Chen and Risen's result (for E0) and from Proposition~\ref{P} (for E1--E3). However, the experiments may differ in \emph{sensitivity}: how likely is each experiment to yield statistically significant evidence of dissonance in case the null hypothesis is false? Since this likelihood depends on the number of test subjects, to get a fair comparison, we assume that all four experiments use the same number of subjects.  For simplicity, we assume that all four experiments use 15 objects and 105 subjects.  In E0, these subjects are split between the experimental and control group.

One disadvantage of E0 comes from having half as many subjects in its experimental group (which increases the standard error by a factor of $\sqrt{2}$) and
from needing to calculate the spread \emph{difference} between two groups (which increases it by another factor of $\sqrt{2}$, assuming the two groups have similar standard deviations, which was the case in simulations).

The primary disadvantage of E2 is that many subjects are wasted on $(i,j)$-choices for which $\Delta$ is large enough that dissonance researchers would not hypothesize any spread due to choice-induced attitude change.  Dissonance and attitude change are only theorized to emerge when the decision is hard enough that the subject feels a need to rationalize choosing one item over the other.

In computer simulations, E0 was more likely than E2 to produce statistically significant evidence of a choice-induced spread phenomenon, if this phenomenon was modeled to only appear when $\Delta$ was small.  However, the winner depended completely on how we modeled the dependence on $\Delta$ of the choice-induced spread phenomenon, which felt like a fairly arbitrary decision.  The decision could not be empirically guided, since there is still no valid experimental evidence in the literature of \emph{any} type of choice-induced spread.

It might be possible to implement E1 so that it has the advantage but not the disadvantage of E2, by pre-selecting a pair of comparison objects that one expects to appear close together near the center of most subjects' rankings.  To achieve this, one could let the 15 items be ranked by a small number of test subjects, and then pick two items (e.g. toaster and hairdryer) that typically rank ``near the middle.'' The main experiment could then be done (with a different group of test subjects), always computing the toaster/hairdryer spread.  If it bears out that subjects in the main experiment also typically rank toaster and hairdryer close together, then not many subjects will be wasted on $(i,j)$-choices for which $\Delta$ is large, and no subjects are wasted on a control group either.

One might think that E3 is an improvement on E2, since it eliminates the random selection of $(i,j)$-pairs, and thereby eliminates the worry that E2's outcome could be blamed on a non-representative sampling of pairs.  However, in simulations, E2 and E3 had similar standard deviations.  Moreover, since the 105 subjects in E3 perform \emph{different} tasks, it is not necessarily valid to use the usual standard error formula, $SE=\sigma/\sqrt{105}$, so the statistical significance of E3's outcome is more complicated to measure.

\section{Further problems with the free-choice paradigm}
In this section, we mention a problem shared by all four of the free-choice experiments considered in the last section; namely, a positive outcome could be blamed on psychological phenomena other than dissonance.  Minimizing the impact of such other phenomena on free-choice experiments has always been of concern to dissonance researchers, going back as far as Brehm's original work~\cite{B}.

For example, suppose that the act of choosing between two objects doesn't change the subject's true ranking, but does force the subject to think more carefully about the true positions of these objects in his/her true ranking.

This behavior can easily be modeled with simulated subjects as in Section 6, using two coin-weightings, $0\leq p\leq P\leq 1$.  The subject's first ranking is subject to the larger volume, $P$, of random noise.  The subject's object choice is subject to the smaller volume, $p$, because s/he is forced to think about it more carefully.  In the second ranking, the positioning of the comparison objects is subject to the smaller volume of noise, because the subject has thought more carefully about these positions.  Since spread depends only on the positioning of the comparison objects, it is equivalent to make the second ranking overall subject to the smaller volume, $p$, of random noise.  Of course the second ranking for a member of the control group of E0 is still subject to the larger volume, $P$.

Experiments E0, E2 and E3 will all report positive outcomes in the presence of this ``think about it more carefully'' phenomenon, even though the subjects never change their true rankings.  To see this without a computer, consider the extreme case where $p=0$ (no random noise) and $P=1$ (the limit case in which rankings are completely random).  In this case, it is straightforward to calculate analytically that the expected spread in E2 and in E3, and the expected ``spread difference'' between the experimental and control groups of E0, are all equal to $16/3$.  However, with milder parameter choices, E2 and E3 are fooled much more dramatically than E0.  For example, with $p=0.5$ and $P=0.9$, the expected spread difference for E0 equals $0.01$, while the expected spread for E3 equals 0.14.  Thus, while all of the experiments are subject to this criticism, when parameters are set to reasonably model human behavior, the flaw appears to be much more problematic in E2 and E3 than in E0.

Memory is another psychological process on which positive results of free-choice experiments might be blamed.  A subject will obviously remember his/her stage-two choice, and might be inclined to construct a stage-three ranking that is consistent with this choice.  Even without assuming any particular coin-flipping model for human behavior, it is obvious that this inclination could lead to positive experimental outcomes in E0, E2 and E3.  For example, compared to the control group in E0, the experimental group will exhibit increased spread caused by this inclination to move the stage-three ranking in the direction of the stage-two choice.  While this might be considered a special case of the choice-induced attitude change that free-choice experiments hope to measure, it is too special of a case because it can be blamed on memory rather than dissonance.

We have not yet discussed E1 because its outcomes will depend on the manner in which true rankings vary from subject to subject, which can't be modeled accurately in computer simulations, since it depends heavily on the particular 15 objects that are used and how real people feel about them.  Nevertheless, it is clear that E1 is subject in principle to the same criticisms as the other experiments.  For example, if each subject's true ranking is modeled as completely random, than E1 is equivalent to E2.

\section{The auxiliary hypotheses of past free-choice experiments}

The probabilistic arguments of Chen and Risen cast some doubt on the conclusions of all past free-choice experiments.  Although E0, E1, E2, and E3 are immune to these criticisms, their results could be blamed on natural psychological processes other than dissonance.  Can any type of free-choice experiment correctly measure choice-induced attitude change caused by dissonance?  One potential glimmer of hope might come from reexamining the auxiliary hypotheses of past free-choice experiments.  Here are summaries of a few of them:
\begin{enumerate}
\item Brehm and Jones performed a free-choice experiment in which subjects rated music albums~\cite{BJ}.  Before choosing between a pair of albums, some of the subjects were told that one of the pair was associated with a record company promotion, so they would receive a free pair of movie tickets if they happened to choose the promoted one (they were not told which one it was).  Other subjects were not told about the promotion until after making their choice.  Among both groups of subjects, some won the free movie tickets and some did not.  The result was that subjects who knew about the promotion before making the choice and who received the movie tickets exhibited less positive spread than subjects in the other three categories.  Presumably, a foreseeable positive benefit to their selection left them with less need to rationalize the choice.

\item Steele, Hopp, and Gonzales used the free-choice paradigm to study the role of affirmation in  dissonance theory~\cite{SHG}.  Some of their subjects valued science highly, while others valued business highly.  In both groups, some of the subjects performed the experiment wearing a lab coat, and some did not.  The result was that subjects who valued science and wore a lab coat exhibited less positive spread than subjects in the other three categories.  One interpretation is to regard dissonance as posing a threat to one's self-system.  The lab coat presumably served as a symbolic affirmation of a core value, thus reducing the need to decrease dissonance through attitude change.

\item Stone used the free-choice paradigm to study the relationship of self-esteem to dissonance~\cite{Stone}.  He found that subjects with high self-esteem showed less positive spread than subjects with low self-esteem, but only when the subjects were primed to think about personal, as opposed to social, standards.

\item  Heine and Lehman found that Japanese men exhibited less positive spread than Canadian men~\cite{HL}.  One interpretation involves the differences between independent and interdependent cultures.  Kitayama, Snibbe, Markus and Suzuki tested this interpretation in a follow-up study in which they asked some of the Japanese men to rank the albums as they thought that \emph{most students} would rank them~\cite{KSMS}.  Subjects given these instructions exhibited larger average spread than subjects who were asked to report personal rankings.  Presumably, dissonance was aroused more by events that were felt to affect interpersonal relationships with others.

\end{enumerate}

Given the known problems with the free-choice paradigm, what should be made of these results?  A significant difference in spread between two groups must mean something.  Since there are now several known possible interpretations of positive spread other than dissonance, one must decide whether any of these alternative interpretations can explain the experimental results.  Chen and Risen noted that positive spread can occur because people maintain true rankings.  Could self-esteem affect the degree to which subjects maintain true rankings?  Positive spread can also occur because of memory or the ``think about it more carefully'' phenomenon.  Does self-esteem correlate with memory?  Could the chance to win a movie ticket make a subject think more carefully about which album has the promotion rather than about his/her true feelings for the albums?  Could a lab coat make a science-valuing subject feel more obliged to construct a second ranking that is consistent with his/her object choice?

Although questions along these lines are important, in some cases the original interpretations of the experimental results feel more plausible than the alternative interpretations.  Also, some of these auxiliary hypotheses were independently tested via the induced-compliance and effort-justification paradigms.  We believe that it is a bit of a stretch to interpret all of the above results under a model in which there is no dissonance.  Thus, taken together, the past free-choice experiments provide at least some evidence of choice-induced attitude change caused by dissonance.

\appendix\section{Expected spread calculation}
In this appendix, we describe the methods by which we derived the expected spread values in Table~\ref{table2}.

In this $12$-object experiment, the subject provides three rankings (the second of which determines his/her stage-two choice).  Thus, there are $(12!)^3\approx 10^{26}$ possible outcomes of this experiment, so the straightforward approach of tabulating all possible outcomes on a computer does not work here.  A statistical approach would be reasonable; for example, one could approximate each expected spread in the table as the average spread of $100,000$ computer simulated runs of the free choice experiment. But this introduces approximation errors.  In the remainder of this section, we instead describe an algorithm that calculates the \emph{exact} value of each expected spread in the table.

The key simplification is to only track the ranking-positions of the two objects between which a stage-two choice is made (say ``toaster'' and ``hairdryer''), treating the other 10 objects as indistinguishable.  So instead of managing all $12!$ permutations of the 12 distinct objects, we will only manage the $12\times 11 = 132$ members of the set, $S$,  of ``simplified rankings'' defined as
$$S=\{(a,b)\mid 1\leq a,b\leq 12, a\neq b\},$$
where $a$ represents the position of the toaster, and $b$ the position of the hairdryer.  For example, the element $(7,3)\in S$ represents the ranking $$(*,*,\text{hairdryer},*,*,*,\text{toaster},*,*,*,*,*).$$  We will consider the members of $S$ to be ordered in some fashion from $1$ to $132$.

We next define a pair of $132$-by-$132$ matrices, called $Q$ and $M$, whose rows and columns will be indexed by the elements of $S$.  For $s_1,s_2\in S$, the entry $Q_{s_1,s_2}$ (in row $s_1$ and column $s_2$ of $Q$) represents the probability that the act of swapping a single randomly chosen adjacent pair of positions will convert $s_1$ into $s_2$.  The entry $M_{s_1,s_2}$ represents the probability that the experiment's random process (repeated swapping randomly chosen adjacent pairs until the coin first lands tails) will convert $s_1$ into $s_2$.  It is straightforward to design an algorithm that explicitly calculates the entries of $Q$.  We can then compute the entries of $M$ in terms of the entries of $Q$ by solving the recursive formula
$$M=(1-p)I + pMQ$$
for $M$, which yields:
$$M=(1-p)(I-pQ)^{-1}.$$
To avoid rounding errors, this matrix calculation could be done over the rational numbers, although the denominators of the entries of $M$ become rather unwieldy.

At first glance, the matrix $M$ might seem useful only for studying a free-choice experiment in which the two objects (``toaster'' and ``hairdryer'') are pre-selected.  However, by exploiting the experiment's symmetry, we will now use $M$ to study the free-choice experiment in which the two object-positions ($i$ and $j$) are pre-selected.

It is convenient to re-name the objects as ``$1$'' through ``$12$'' in the order of the subject's true ranking.  An arbitrary ranking, $\sigma$, can thereby be considered as a bijection of $\{1,...,12\}$, with the subject's true ranking represented by the identity function, $\text{Id}$.  Recall that the subject gives three rankings, $\sigma_1,\sigma_2,\sigma_3$, from which her spread, $\mathfrak{s}=\text{spread}(\sigma_1,\sigma_2,\sigma_3)$ is calculated as:
$$ \mathfrak{s} =
\begin{cases} (j-i)-\left(\sigma_3\circ\sigma_1^{-1}(j)-\sigma_3\circ\sigma_1^{-1}(i)\right) &\mbox{if }
    \sigma_2\circ\sigma_1^{-1}(i)<\sigma_2\circ\sigma_1^{-1}(j),\\
    \left(\sigma_3\circ\sigma_1^{-1}(j)-\sigma_3\circ\sigma_1^{-1}(i)\right) - (j-i) & \mbox{otherwise}, \end{cases}$$

Notice that $\text{spread}(\sigma_1,\sigma_2,\sigma_3)
=\text{spread}(\sigma_1\circ\tau,\sigma_2\circ\tau,\sigma_3\circ\tau)$
for any bijection $\tau$ of $\{1,...,12\}$.  In particular,
$$\mathfrak{s}=\text{spread}(\sigma_1,\sigma_2,\sigma_3)=
\text{spread}(\text{Id},\sigma_2\circ\sigma_1^{-1},\sigma_3\circ\sigma_1^{-1}).$$
Define $s_0=(i,j)\in S$.  In terms of simplified rankings, our list of all possible outcomes of the experiment is:
$$D=\{(s_1,s_2,s_3)\mid s_i\in S\},$$
where $s_1=\sigma_1^{-1}(s_0)$, $s_2=\sigma_2(s_1)$ and $s_3 = \sigma_3(s_1)$, with permutations acting on members of $S$ in the natural way here.
Our strategy is to calculate the spread and the probability of each of the $132^3=2,299,968$ outcomes in $D$, and then compute the expected spread of the experiment as the corresponding weighted sum.

Consider an arbitrary member $\beta=(s_1,s_2,s_3)=((a_1,b_1),(a_2,b_2),(a_3,b_3))\in D$.   The spread of $\beta$ simplifies to:
$$\mathfrak{s}=\begin{cases} (j-i)- (b_3-a_3) &\mbox{if }
    a_2<b_2,\\
    (b_3-a_3)- (j-i) & \mbox{otherwise}. \end{cases}$$
The probability of $\beta$ equals $P_1\cdot P_2\cdot P_3$, where
$$P_1=M_{s_0,s_1},\,\,\, P_2=M_{s_1,s_2},\text{ and }P_3=M_{s_1,s_3}.$$
To justify the first of these three equations, observe that $\sigma_1$ and $\sigma_1^{-1}$ are identically distributed because the probability that any sequence of adjacent swaps will occur is the same as the probability that the sequence will occur in reverse order.

\bibliographystyle{amsplain}
\bibliography{Selinger_Tapp}

\end{document}